\newcommand{\com}[1]{}
\newcommand{\calO}{\mathcal{O}}
\newcommand{\fr}{Fr\'{e}chet distance}
\title{Computing the Fr\'echet distance of trees and graphs of bounded tree width
} 
\titlerunning{Computing the Fr\'echet distance of trees and graphs}
\author{Maike Buchin}{Ruhr-Universit\"at Bochum, Germany }{maike.buchin@rub.de}{}{}
\author{Amer Krivo\v{s}ija}{Department of Computer
        Science, TU Dortmund, Germany }{amer.krivosija@tu-dortmund.de}{}{}
\author{Alexander Neuhaus}{Department of Computer
        Science, TU Dortmund, Germany }{alexander2.neuhaus@tu-dortmund.de}{}{}        
\authorrunning{M. Buchin, A. Krivo\v{s}ija and A. Neuhaus}
\keywords{Fr\'{e}chet distance, trees, bounded tree width, graph isomorphism}
\begin{document}
	
	\maketitle
	
	\begin{abstract}
		We give algorithms to compute the Fr\'{e}chet distance of trees and graphs with bounded tree width. Our algorithms run in $\calO(n^2)$ time for trees of bounded degree, and $\calO(n^2\sqrt{n \log n})$ time for trees of arbitrary degree. For graphs of bounded tree width we show one can compute the \fr{} in FPT (fixed parameter tractable) time.
	\end{abstract}

	\section{Introduction}\label{ch:intro}
	
	The \fr{}, a distance measure for curves introduced by Fr\'{e}chet in \cite{Fr1906}, is a popular measure for comparing polygonal curves. 
	It is defined via homeomorphisms between the parameter spaces of the curves, or intuitively by a man walking his dog and the shortest length of a leash connecting them.

	The \fr{} is well studied. Alt and Godau~\cite{AG95} gave a polynomial time algorithm to compute the \fr{} between two curves, sparking research in many applications, such as character recognition~\cite{charrecon} or navigation on road maps~\cite{Sharma2019}. One can also define the \fr{} between other objects like surfaces \cite{Alt2009} or polygons \cite{BBS10}. Here we study the \fr{} of straight-line embedded graphs, i.e., every vertex of the graph is assigned to a point in the metric space and every edge between two vertices is a straight line segment between the respective points. 
	
	First we observe that two graphs are homeomorphic in the topological sense, if they are isomorphic after their degree~2 vertices have been contracted \cite{BBS10}. That is, a graph homeomorphism induces a graph isomorphism on the contracted graphs, and vice versa.

	Hence in the following we always assume that degree~2 vertices have been contracted and define the \fr{} between 
	embedded graphs $G_1=(V_1,E_1)$ and $G_2=(V_1,E_1)$ as:
	\[\delta_F(G_1,G_2) =  \min_{\pi \colon G_1 \mapsto G_2} \ \max_{v \in V_1} \ \|v-\pi(v)\|\] where $\pi \colon G_1 \rightarrow G_2 $ is an isomorphism and $\|v-\pi(v)\|$ is the distance between the points corresponding to $v$ and $\pi(v)$.
	If the graphs do have degree~2 vertices we need to also take into account the \fr{} between contracted edges (embedded as polygonal paths).

	Thus the \fr{} of two graphs is closely related to the (contracted) graphs being isomorph. That is, for isomorphic graphs the \fr{} gives the smallest distance between their embeddings, whereas for non-isomorphic graphs the \fr{} is undefined. Hence the insights on these two notions translate to each other.

	\textbf{Related work:} To the best of our knowledge, the \fr{} of graphs (in this setting) has only been considered by Buchin et al.~\cite{BBS10}.

	They studied the hardness of the \fr{} between surfaces and also discuss the \fr{} of graphs. In particular, they sketched how to compute the \fr{} between two embedded trees. Their idea is to decide whether there is an isomorphism between the trees respecting a given distance by checking this bottom-up in the tree. An isomorphism respects a distance $\delta$ if the distance between every vertex and its image is at most $\delta$. This decider can then be used by a binary search over all possible distances to compute the \fr{}. We expand upon this idea and show that one can compute the \fr{} between two trees even faster when carefully computing the \fr{} between every subtree in a bottom-up fashion.
	
	As one can compute the \fr{} between two trees it seems natural to look at \emph{tree-like} graphs. We say a graph is tree-like if it has bounded tree width, see Section~\ref{ch:graphs}. The definition of the \fr{} given above requires the graphs to be isomorphic. Only recently Lokshtanov et al.  \cite{lpps17} gave a canonization algorithm showing that graph isomorphism for graphs of bounded tree width can be decided in \FPT{} (fixed parameter tractable) time. Their algorithm however does not compute isomorphisms between the graphs, which we are interested in. Grohe et al.  \cite{gns18} gave an algorithm computing all isomorphisms between two input graphs. We alter this algorithm to compute the \fr{} of two embedded graphs with bounded tree width.
	
	\textbf{Results:} In Section~\ref{ch:tree} we give an algorithm computing the \fr{} between two embedded trees. We show that this can be done in $\calO(n^2 \sqrt{d \log d})$ time, where $n$ is the maximum number of vertices and $d$ is the maximum degree of the trees. In Section~\ref{ch:graphs} we show how one can compute the \fr{} of two graphs with $n$ vertices and tree width at most $k$ in $2^{\calO(k \log^c k)}n^{\mathcal{O}(1)} \cdot \log n$ time.

	\section{Contracting the graphs}\label{ch:contract}
	
	Our algorithms assume graphs without vertices of degree~2. Hence we preprocess the two input graphs as follows. We start by finding all paths of maximum length containing only degree~2 vertices in each graph with a DFS. Next we view these paths and the edges as parameterized curves and compute the \fr{} between every such curve of the first and second graph and store them in a 2-dimensional array. The last step is to connect the endpoints of each curve with a marked edge and to delete all inner vertices of the curves. If the two input graphs are trees one can compute all \fr{}s between each of those curves in time $\calO(n^2 \log n)$ and store them in an array size of $\calO(n^2)$. If at most one of the trees has \emph{long} degree~2 paths, i.e., of non-constant length, the computation takes only $\calO(n^2)$ time, as we show next.

	\begin{lemma}
	Let $G=(V,E)$ and $G'=(V',E')$ be two trees with $|V|=|V'|=n$. One can contract all paths of degree~2 vertices and store the \fr{}s between each pair of edges of the contracted graphs in an array of size $\calO(n^ 2)$. This procedure takes $\calO(n^ 2 \log n)$ time, or $\calO(n^ 2)$ time if at least one graph has only degree~2 paths of constant length.
	\end{lemma}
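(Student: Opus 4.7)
My approach is to run a DFS on each tree to find and contract all maximal paths of degree-2 vertices, and then compute pairwise \Frechet distances between the resulting polygonal curves using the algorithm of Alt and Godau, storing them in a two-dimensional array. A single DFS per tree in $\mathcal{O}(n)$ time enumerates all maximal degree-2 paths (each vertex lies on at most one such path) and replaces each path by a marked edge between its endpoints. Let the contracted edges of $G$ correspond to curves $P_1,\dots,P_m$ of edge-lengths $\ell_i$, and those of $G'$ to $Q_1,\dots,Q_{m'}$ of edge-lengths $\ell'_j$. Because every vertex of $G$ (resp.\ $G'$) appears in at most one such curve, $\sum_i \ell_i = \mathcal{O}(n)$, $\sum_j \ell'_j = \mathcal{O}(n)$, and $m, m' \leq n$, so the array indexed by pairs of marked edges has size $\mathcal{O}(n^2)$.

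For the general bound, I would invoke Alt--Godau on each pair $(P_i,Q_j)$ in $\mathcal{O}(\ell_i\ell'_j \log(\ell_i\ell'_j))$ time. Since $\log(\ell_i\ell'_j)\leq 2\log n$, the total running time is
\[
\sum_{i,j}\mathcal{O}\!\left(\ell_i\ell'_j\log(\ell_i\ell'_j)\right)\;\leq\; \mathcal{O}(\log n)\Big(\sum_i \ell_i\Big)\Big(\sum_j \ell'_j\Big)\;=\;\mathcal{O}(n^2\log n),
\]
which gives the first bound of the lemma.

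For the special case, assume without loss of generality that every path in $G'$ has $\ell'_j = \mathcal{O}(1)$ edges. Then each pair $(P_i,Q_j)$ involves a curve of constant complexity, and I claim its \Frechet distance can be computed in $\mathcal{O}(\ell_i)$ time: the free-space diagram has only $\mathcal{O}(\ell_i)$ cells of constant size, and the $\mathcal{O}(\ell_i)$ critical values it produces can be handled by linear-time selection rather than the logarithmic parametric search used in the general algorithm. Summing over all pairs,
\[
\sum_{i,j}\mathcal{O}(\ell_i)\;=\;m'\cdot\sum_i\ell_i\;\leq\;\mathcal{O}(n)\cdot\mathcal{O}(n)\;=\;\mathcal{O}(n^2).
\]
The delicate step is precisely this linear-time per-pair bound, because textbook Alt--Godau carries an unavoidable $\log$ factor from parametric search; I expect to justify it by a direct case analysis of the constant-size free-space diagram (or by citing the corresponding specialization of \Frechet distance for one curve of constant complexity), after which summation and array bookkeeping are routine.
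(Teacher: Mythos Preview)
Your argument mirrors the paper's almost exactly: DFS to enumerate the disjoint degree-2 paths, the bound $\sum_i \ell_i \le n$ (and likewise for $\ell'_j$), Alt--Godau pairwise for the $\calO(n^2\log n)$ total, and a per-pair linear bound in the constant-complexity case summing to $\calO(n^2)$. The paper in fact offers no justification at all for that linear per-pair bound, so you are already more careful in flagging it; just note that your sketched mechanism (``$\calO(\ell_i)$ critical values plus linear-time selection'') is not quite right as stated, since Alt--Godau type-(c) critical events can number $\Theta(\ell_i^2)$ even against a constant-size curve, so the direct free-space case analysis you allude to is indeed what would be needed rather than a selection argument.
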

	\begin{proof}
	Given two trees $G=(V,E)$ and $G'=(V',E')$ with $|V|=|V'|=n$. Notice that the paths of degree~2 vertices are pairwise disjoint, except possibly for their endpoints. 
Thus these paths define a partition of $V$ and $V'$ respectively. 
	Also we can bound the number of such paths in each graph by $\frac{n}{2}$. 
	Let $\ell_1, \dots, \ell_{\frac{n}{2}}$ and $\ell_1', \dots, \ell_{\frac{n}{2}}'$ be the lengths of such paths within $G$ and $G'$ respectively. 
	If there are less than $\frac{n}{2}$ paths the corresponding lengths are $0$. It holds that $\sum_{i=1}^{n/2} \ell_i \leq n$ and $\sum_{i=1}^{n/2} \ell_i' \leq n$. 

          We compute the distance between each pair of edges in the contracted graphs. The easy case is when both edges are non-contracted and we can compute the distance in $\calO(1)$ time. 

If one edge is contracted and the other is not (or is contracted but has constant length), this takes time linear in the length of the contracted edge, which sums up to $\sum_{i=1}^{n/2}(c \cdot \ell'_i) \leq cn$ for a single edge, and hence quadratic time overall.

To compute the \fr{}s between each pair of contracted edges (of non-constant length), i.e. degree~$2$ paths, in $G$ and $G'$ we use the algorithm of Alt and Godau \cite{AG95} to compute the \fr{} of two paths in time $T(\ell_i,\ell_j') = \calO(\ell_i\ell_j'\log (\ell_i\ell_j'))$. 
Hence in total we need time $\sum_{i=1}^{n/2} \sum_{j=1}^{n/2} T(\ell_i,\ell_j') \leq c n^2 \log n$.
Note that this computation time is only necessary if both graphs have long degree~2 paths.

We store the distances in an array of size $O(|E| \cdot |E'|)$. Note that the contracted trees still remain trees, hence this array has size $\calO(n^2)$.  
	All steps take time $\calO(n^2\log n)$  (or $\calO(n^2)$ if there are no long degree~2 paths). 
	\end{proof}
	
	\section{Algorithm for trees}\label{ch:tree}
	
	Computing the \fr{} of two graphs requires the graphs to be isomorphic. It is well known that the isomorphism problem for two trees can be solved in polynomial time \cite{Aho1974}, and therefore we look at trees first. If not stated otherwise the trees have a designated root. If they do not, we use the fact that a tree isomorphism needs to match the centers of the trees.

	We give a brief overview of the algorithm, based on the ideas in~\cite{BBS10}.
	The algorithm computes the \fr{} in a bottom up way, comparing two nodes of same height in every step. Let $T$ and $T'$ be the two input trees, and $t \in T$ and $t' \in T'$ two vertices of the same height and equal degree. 
	Let $t_1, \dots , t_i$ and $t'_1, \dots , t_i'$ be the children of $t$ and $t'$. Assume we already have computed the \fr{} between the subtrees rooted at the children.
	To compute the \fr{} between the trees rooted at $t$ and $t'$ we do the following. We create a new bipartite graph $B$. For every subtree of $t$ and $t'$ there is a vertex in $B$. We combine every vertex corresponding to a subtree of $t$ with every vertex corresponding to a subtree of $t'$ with a weighted edge. The weight of the edge is the maximum of the \fr{} between the subtrees, and the \fr{} of the edges connecting the subtrees to $t$ or $t'$.
	
	Now we find a \emph{bottleneck matching} for this graph. A bottleneck matching is a perfect matching such that the maximum weight of its edges is minimal for all perfect matchings possible.
	The \fr{} of the trees rooted at $t$ and $t'$ is either the maximum weight of the found bottleneck matching or the distance of $t$ and $t'$. We store the correct value in a two dimensional array. If we could not find a bottleneck matching the trees are not isomorphic and thus their \fr{} is undefined. In this case we store $\infty$. If the subtrees are empty we store $0$.
	
	The algorithm iterates over both trees doing the above computation for each pair of nodes of same height. Afterwards the algorithm returns the value stored for the roots of the trees.
	
	Figure \ref{fig:bottleneck} illustrates the computation of a bottleneck matching, given the two trees in (a) as input. Consider the computation of the \fr{} of the trees rooted at $u$ and $u'$. The graph $B$ is shown in (b) and the distances are given in the table in (c). In (d) we see the computed bottleneck matching with maximum \fr{} of $1$.
	
	\begin{figure}[ht]
		\begin{subfigure}[b]{0.9\linewidth}
			\centering
			\includegraphics[width=0.9\textwidth]{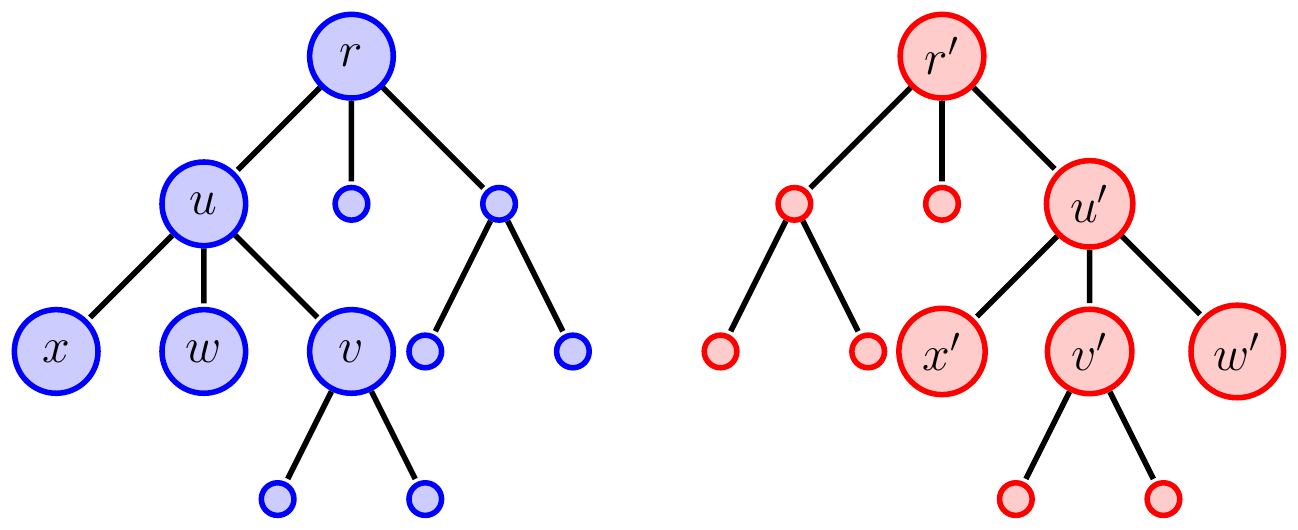}
			\subcaption{}
		\end{subfigure}\hfill
		
		\vspace{5ex}
		
		\begin{subfigure}[b]{0.33\linewidth}
			\centering
			\includegraphics[width=0.75\textwidth]{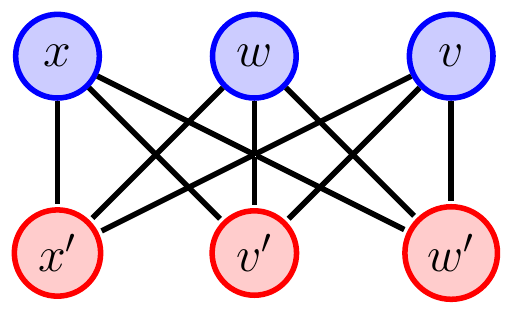}
			\subcaption{ }
		\end{subfigure}
		\begin{subfigure}[b]{0.3\linewidth}
			\centering
			\begin{tabular}{|c|c|c|c|}
				\hline
				& $v$ & $w$ & $x$ \\
				\hline
				$v'$ & $1$ & $2$  & $2$ \\
				\hline
				$w'$ & $2$ & $1$ & $2$ \\
				\hline
				$x'$ & $2$ & $2$ & $1$  \\
				\hline
			\end{tabular}
			\subcaption{ }
		\end{subfigure}
		\begin{subfigure}[b]{0.33\linewidth}
			\centering
			\includegraphics[width=0.75\textwidth]{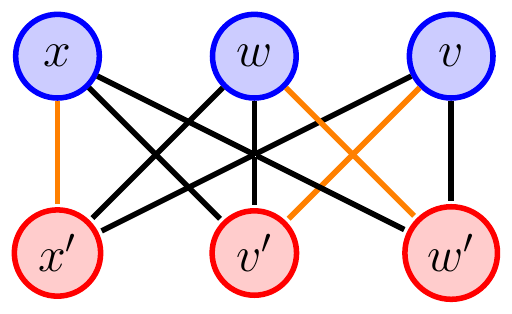}
			\subcaption{ }
		\end{subfigure}
		\caption{Two trees~(a), the bipartite graph $B$~(b), the weights~(c), and bottleneck matching~(d)}
		\label{fig:bottleneck}
	\end{figure}
	
	It remains to compute such a bottleneck matching. To do this we store all edge weights in a sorted array and perform a binary search. The perfect matching for the distance found by this search is our desired bottleneck matching.
	
	\bigskip
	
	The running time of this algorithm is as follows. We use the algorithm of Alt et al. \cite{alt91} to compute the perfect matchings. This means we can bound the running time of one bottleneck matching computation by $\calO\left(d^2\sqrt{d \log d}\right)$ time, with $d$ being the degree of the trees. 
	
	The running time of the algorithm boils down to the time we compute a bottleneck matching. To compute such a matching we search the smallest edge weight under which there is a perfect matching. Since the graphs we do this for are of size $\calO(d)$ we need time $\calO \left(d^2 \sqrt{\frac{d}{\log d}}\right)$ to compute a perfect matching using the algorithm of Alt et al. \cite{alt91}. We perform a binary search over all weights using this subroutine, thus we need time $\calO(d^2 \sqrt{d \log d})$ for that. Next we analyze how often we compute a bottleneck matching. Let $h$ be the height of the trees. Then we compute $\sum_{i=1}^h {\left(d^i\right)^2}$ bottleneck matchings. Thus we need time $c \cdot d^2 \sqrt{d\log d} \cdot \sum_{i=1}^h{\left(d^i\right)^2} \leq c' n^2 \sqrt{d \log d} \in \calO (n^2 \sqrt{d \log d})$, for positive constants $c,c'$. This yields:
	
	\begin{theorem}\label{thm:trees}
		The \fr{} of two embedded trees can be computed in time $\calO\left(n^2\sqrt{n \log n}\right)$, if at most one of the trees has long degree~2 paths. If the trees have bounded degree the computation only takes $\calO(n^2)$ time. If both trees have long degree~2 paths, an extra $\calO\left(n^2 \log n\right)$ time is needed for preprocessing.
	\end{theorem}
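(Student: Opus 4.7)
The plan is to establish correctness by bottom-up induction on subtree height and to bound the running time by combining the contraction lemma with a per-pair bottleneck matching analysis.

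For correctness, I would fix a pair $(t,t')$ of vertices at the same height in $T$ and $T'$ and assume inductively that every pair of children's subtrees $(T_{t_i}, T'_{t'_j})$ already has its \fr{} (or $\infty$) correctly stored. Any isomorphism between the subtrees rooted at $t$ and $t'$ must send $t$ to $t'$ and induce a bijection between their child sets; its maximum vertex displacement decomposes into the maximum of $\|t-t'\|$, the displacements inside each matched child subtree, and the \fr{} contributions of the edges $tt_i$ and $t't'_j$ (obtained from the preprocessing lemma). Since every edge of $B$ is weighted by the maximum of the latter two quantities for the corresponding pair, a perfect matching of maximum weight $w$ in $B$ corresponds exactly to an isomorphism with displacement $\max(\|t-t'\|, w)$. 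A bottleneck matching therefore minimizes $w$ and yields the \fr{}; if no perfect matching exists, no isomorphism does either, and the stored $\infty$ is correct.

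For the running time, preprocessing by the contraction lemma from Section~\ref{ch:contract} costs $\calO(n^2\log n)$, dropping to $\calO(n^2)$ when at most one tree has long degree~2 paths. At each comparable pair $(t,t')$ with degrees at most $d$, the bottleneck matching is computed by binary search over the $\calO(d^2)$ distinct edge weights, with each probe invoking the Alt et al.\ perfect matching routine in $\calO(d^2\sqrt{d/\log d})$ time, for a per-pair cost of $\calO(d^2\sqrt{d\log d})$. The number of comparable pairs is at most $\sum_{i=1}^h (d^i)^2$, where $h$ is the tree height; combining this with the per-pair cost and the bound $d^h \leq n$ telescopes to $\calO(n^2\sqrt{d\log d})$ overall. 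For $d = \calO(1)$ the per-pair cost is constant and the total collapses to $\calO(n^2)$; for general trees one uses $d \leq n$ to obtain $\calO(n^2\sqrt{n\log n})$, and the $\calO(n^2\log n)$ preprocessing term is only added when both trees contain long degree~2 paths.

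The main obstacle lies in the summation step: a naive per-pair charge of $d^2\sqrt{d\log d}$ multiplied by a worst-case $\calO(n^2)$ count of comparable pairs would introduce a spurious factor of $d^2$. The stated exponent is preserved by the structural tradeoff between height and branching factor — deep trees are narrow while wide trees are shallow — so that the level-indexed sum $\sum_i (d^i)^2$ combines cleanly with the per-pair cost rather than being bounded level by level. All remaining ingredients — the isomorphism decomposition at each pair, the bijection between perfect matchings in $B$ and isomorphisms of the joined subtrees, and the correctness of binary search over the sorted array of edge weights — are routine given the algorithmic structure already sketched above.
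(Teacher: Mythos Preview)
Your proposal is correct and follows essentially the same approach as the paper: bottom-up computation via bottleneck matchings, per-pair cost $\calO(d^2\sqrt{d\log d})$ from binary search over the Alt et al.\ matching routine, and the same level-indexed summation $\sum_{i=1}^h (d^i)^2$ combined with $d^h\le n$ to reach $\calO(n^2\sqrt{d\log d})$. Your explicit correctness induction and your remark about the spurious $d^2$ factor are more careful than the paper's presentation, but the argument is otherwise identical.
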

	
	Note that our computation requires the metric of the space to be computed in constant time. If that is not the case the running time for this algorithm increases due to the distance computation.
	Next we generalize this result to graphs that are not trees but are tree-like in their structure.
	
	\section{Algorithm for graphs with bounded tree width}\label{ch:graphs}
	
	Here we consider a larger class of graphs. It is known that many hard problems for graphs can be solved easily on trees, such as 3-coloring problem and finding a vertex cover. Both these problems can be solved in linear time on a tree. One can ask if these problems are easy for tree-like graphs. A measurement for such a likeliness is the \emph{treewidth}. To define the treewidth we first introduce \emph{tree decompositions}. Intuitively a tree decomposition of a graph $G$ represents the vertices and edges of $G$ as subgraphs inside a tree. Formally:
	\begin{definition}
		Let $G=(V,E)$ be a graph. Let $T$ be a tree and $\beta$ a function mapping all $t \in T$, the nodes, to subsets of $V$. We call $(\beta,T)$ a tree decomposition of $G$ if:
		\begin{enumerate}
			\item for each vertex $v \in V$ it holds that all subsets $\beta(t) \in T$, called bags, containing $v$ induce a nonempty connected subtree of $T$, and
			\item for each edge $(u,v) \in E$ there is one $t \in T$, such that the bag $\beta(t)$ contains $u$ and $v$.
		\end{enumerate}
	\end{definition}
	
	The width of such a tree decomposition is the size of its largest bag $-1$. The treewidth of $G$ is the minimal width among all its tree decompositions. We subtract $1$ from the bag size so trees have treewidth $1$.
	
	Figure~\ref{fig:tdc} shows a graph $G$ (a) and two valid tree decompositions of $G$ (b) and (c). Decomposition (c) is a trivial tree decomposition consisting only of one node and (b) is a tree decomposition, that realizes the treewidth of $G$
	
	\begin{figure}[ht]
		\begin{subfigure}[b]{0.33\linewidth}
			\centering
			\includegraphics[width=0.75\textwidth]{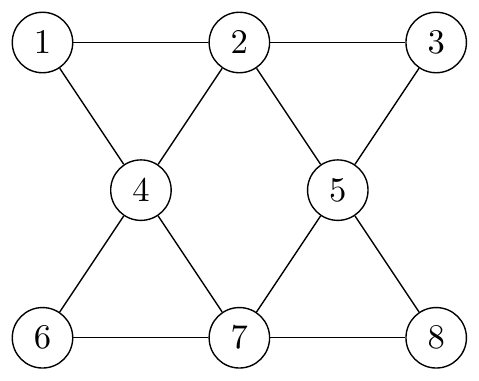}
			\subcaption{ }
		\end{subfigure}\hfill
		\begin{subfigure}[b]{0.33\linewidth}
			\centering
			\includegraphics[width=0.75\textwidth]{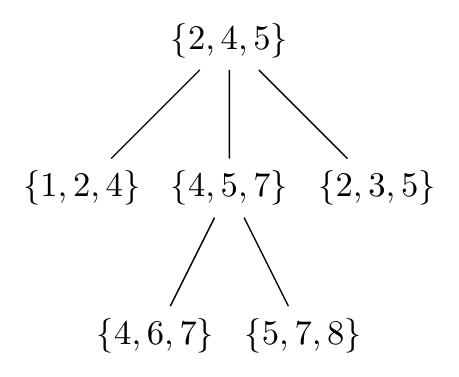}
			\subcaption{ }
		\end{subfigure}
		\begin{subfigure}[b]{0.33\linewidth}
			\centering
			\includegraphics[width=0.75\textwidth]{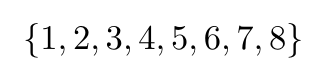}
			\subcaption{ }
		\end{subfigure}
		\caption{The graph $G$ (a) and two tree decompositions (b) and (c),  of $G$}
		\label{fig:tdc}
	\end{figure}
	
	Tree decompositions play an important role in the context of parameterized algorithms. An algorithm is parameterized, if its running time does not only depend on the input size $n$ but also on a parameter $k$. A problem is fixed parameter tractable, if there is a algorithm with running time $f(k) \cdot \text{poly}(n)$ for it. Many graph problems have algorithms with tree width as parameter, see \cite{downey2012, flum2006}. Typically these algorithms use dynamic programming over a tree decomposition of the graph. Our algorithm is based on the algorithm of Grohe et al.~\cite{gns18}. We extended their algorithm, such that we can compute all isomorphisms between two embedded graphs that respect a distance limit.
	Using this algorithm we conduct a binary search on every possible distance between nodes of $G_1$ and $G_2$. The \fr{} of the two graphs is the smallest distance found that is respected by an isomorphism.
	
	We describe our algorithm.
	Let $G_1=(V_1,E_1)$ and $G_2=(V_2,E_2)$ be two embedded graphs, and $\delta\geq 0$ a distance limit. We start by preprocessing the graphs in a similar fashion as in the case of trees, see Section \ref{ch:tree}. Since the number of edges of each graph might exceed $n$ we need more time to do this, but clearly one can compute all the distances in polynomial time. Next we compute an initial tree decomposition using the techniques of Leimer \cite{lei93}.
	The resulting decomposition is known to be isomorphism invariant. We carefully refine these tree decompositions in a bottom up way using dynamic programming.
	In every step we maintain the invariant that the tree decompositions are isomorphism invariant, the size of the resulting bags is not too high, and we get crucial information about the structure of both graphs.

	We now use the resulting tree decomposition to compute the isomorphisms respecting $\delta$ in a bottom up way. Suppose we are looking at two nodes $t_1 \in T_1$ and $t_2 \in T_2$, and have already computed all isomorphisms respecting $\delta$ between the subtrees of $t_1$ and $t_2$ rooted at their respective children.
	To compute the isomorphisms between the graphs induced by the trees rooted at $t_1$ and $t_2$ Grohe et al. \cite{gns18} use two abstractions. 
	
	Let $A$ be a set, $\varphi$ be a bijective mapping $A \rightarrow |A|$ and $\Theta$ be a subgroup of Sym($|A|$), the symmetric group of $|A|$, then $\varphi \Theta$ is a \emph{labeling} coset. Let $\psi \Lambda \subseteq \varphi \Theta$. If $\psi \Lambda$ forms a labeling coset we call it a \emph{labeling subcoset} of $\varphi \Theta$ ($\psi \Lambda \precsim \varphi \Theta$). The first abstraction is called \emph{coset-hypergraph-isomorphism} and is defined as follows \cite{gns18}:
	
	An instance of \emph{coset-hypergraph-isomorphism} is a 8-tuple $\mathcal{I}=(V_1,\mathcal{S}_1,V_2,\mathcal{S}_2, \chi_1, \chi_2, \mathcal{F}, f)$, such that for $i \in \{1,2\}$:
	
	\begin{enumerate}
		\item $\mathcal{S}_i \subseteq 2^{V_i}$,
		\item $\chi_i \colon \mathcal{S}_i \rightarrow \mathbb{N}$ is a coloring,
		\item $\mathcal{F} = \{\Theta_S \mid \Theta_S \precsim $ Sym$(S), S \in \mathcal{S}_1\}$ and
		\item $f = \{\tau_{S_1S_2} \colon S_1 \rightarrow S_2 \mid S_1 \in \mathcal{S}_1, S_2 \in \mathcal{S}_2, \chi_1(S_1) = \chi_2(S_2) \}$, such that
		\begin{itemize}
			\item[\textbf{a}] every $\tau_{S_1S_2}$ is bijective, and
			\item[\textbf{b}] for every color $j$, every $S_1, {S'}_1 \in \chi_1^{-1}(j)$, every $S_2, {S'}_2 \in \chi_2^{-1}(j)$ and every $\theta \in \Theta_{S_1}, \theta' \in \Theta_{{S'}_1}$ it holds that
			
			$\theta \tau_{S_1S_2}(\tau_{{S_1}'S_2})^{-1}\theta'\tau_{{S_1}'{S_2}'}$ is an element of the labeling coset $\Theta_{S_1}\tau_{{S_1}'S_2}$.
		\end{itemize}
	\end{enumerate}

The second abstraction used is called \emph{multiple-colored-coset-isomorphism} and is defined as follows \cite{gns18}:

An instance of \emph{multiple-colored-coset-isomorphism} is a 6-tuple $\mathcal{J} = (V_1,V_2,\chi_1,\chi_2,\mathcal{F},f)$ such that for $i \in \{1,2\}$

\begin{enumerate}
	\item $\chi_i \colon [t] \rightarrow \mathbb{N}$ is a coloring,
	\item $\mathcal{F} = \{\Theta_{i'} \mid \Theta_{i'} \precsim $ Sym$(V), i' \in [t]\}$ and
	\item $f = \{\tau_{i',j'} \colon V_1 \rightarrow V_2 \mid i',j' \in [t], $ with $ \chi_1(i') = \chi_2(j') \}$, such that
	\begin{itemize}
		\item[\textbf{a}] every $\tau_{i',j}$ is bijective, and
		\item[\textbf{b}] for every color $'$, every $j_1,{j_1}' \in \chi_1^{-1}(j)$, every $j_2,{j_2}' \in \chi_1^{-1}(j)$ and every $\theta \in \Theta_{j_1}, \theta' \in \Theta_{{j_1}'}$ it holds that
		
		$\theta \tau_{j_1,j_2}(\tau_{{j_1}',j_2})^{-1}\theta'\tau_{{j_1}'{j_2}'}$ is an element of the labeling coset $\Theta_{j_1}\tau_{{j_1}'j_2}$.
	\end{itemize}
\end{enumerate}
	
	Grohe et al. \cite{gns18} use these abstractions to compute all isomorphisms between the graphs using the tree decompositions in a bottom up way using dynamic programming. The abstractions encode the following: $V_i$ are the nodes inside the bags, $\mathcal{S}_i$ are the intersection of the bags with their respective children. For all $S_i \in \mathcal{S}_1$ and $S_j \in \mathcal{S}_2$ the colorings $\chi_1$ and $\chi_2$ are defined as follows: $\chi_1(S_i) = \chi_2(S_j) \Leftrightarrow S_i$ and $S_j$ are isomorphic. The sets $\mathcal{F}$ and $f$ contain all isomorphisms that are computed for the subtrees. The algorithm then computes all isomorphisms between the graphs induced by the bags and their subtrees. 
	
	We alter this in the following way: two sets $S_i, S_j$ are assigned the same color, iff there is an isomorphism between them respecting the given distance limit. Note that properties \textbf{4.b} and \textbf{3.b} require us to have all isomorphisms between two such sets stored. These isomorphisms might not necessarily respect the distance. In this case we simply mark the isomorphism as not sufficient to our needs. The result of this is a representation of isomorphisms between the two graphs. If this representation only contains marked isomorphisms we treat the representation as empty, meaning the graphs are not isomorphic under the distance. In all other cases the representation contains all isomorphisms between the graphs that respect the distance limit.
		
	This means one can decide whether the two graphs have \fr{} at most $k$ in time $2^{\calO(k \log^c k)}n^{\mathcal{O}(1)}$, where $n$ is the number of vertices of one graph, and $c$ is a positive constant.

	With this we perform the binary search for the minimum distance between two vertices that is respected by at least one isomorphism and get the following result:
	\begin{theorem}\label{thm:graphs}
		The \fr{} between two graphs with $n$ nodes and treewidth at most $k$ can be computed in time $2^{\calO(k \log^c k)} \cdot n^{\mathcal{O}(1)} \cdot \log n$.
	\end{theorem}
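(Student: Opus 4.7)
The plan is to follow Grohe et al.~\cite{gns18} for isomorphism of bounded tree width graphs, but to enrich their bottom-up dynamic program so that it only keeps track of isomorphisms that respect a given distance threshold~$\delta$. First I would apply the preprocessing of \secref{contract} to $G_1$ and $G_2$, storing $\distFr{e}{e'}$ for every pair of (possibly contracted) edges $e \in E_1, e' \in E_2$ in a lookup table; since edge counts may exceed $n$ but tree width is bounded, this still takes time polynomial in $n$, absorbed into the $n^{\calO(1)}$ factor. Then I would compute Leimer's canonical tree decomposition~\cite{lei93} of each graph, which is isomorphism invariant and of width $\calO(k)$, and refine it in a bottom-up way so that bag colorings continue to depend only on the isomorphism type of the subgraph below them.

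The main task is to build a decider that, on input $\delta \ge 0$, answers whether there exists an isomorphism $\pi\colon G_1 \to G_2$ with $\|v-\pi(v)\| \le \delta$ for every $v \in V_1$ and $\distFr{e}{\pi(e)} \le \delta$ for every contracted edge $e$. I would run the dynamic program of~\cite{gns18} on the refined decompositions, using the coset-hypergraph-isomorphism and multiple-colored-coset-isomorphism abstractions to represent, at each bag, the set of isomorphisms between the subgraphs below. The key alteration is in the coloring: two intersection sets $S_1 \in \mathcal{S}_1$ and $S_2 \in \mathcal{S}_2$ receive the same color exactly when the subgraphs rooted at them admit at least one isomorphism respecting $\delta$, a check that uses the precomputed edge table. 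To preserve the invariants~\textbf{4.b} and~\textbf{3.b} of the two abstractions, I still propagate every isomorphism, but annotate as \emph{forbidden} those that violate $\delta$ on some vertex or contracted edge; a coset whose elements are all forbidden is treated as empty when cosets are combined at the parent bag.

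The hard part will be verifying that this marking composes correctly with the group-theoretic operations of~\cite{gns18}. Because ``respects $\delta$'' is a pointwise condition on vertex images and on edge images looked up in the preprocessing table, it is preserved under the restriction and intersection operations that build a parent coset from its children, and forbiddenness propagates predictably under the actions of the $\Theta_S$; nevertheless, each step of the combination procedure needs a careful case analysis to make sure no ``good'' isomorphism is accidentally filtered out and no ``bad'' one is retained unmarked. The annotation adds only constant overhead per element, so the decider inherits the Grohe et al.\ running time $2^{\calO(k\log^c k)} n^{\calO(1)}$.

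Finally, to turn the decider into a computation of the \fr{}, I observe that the optimum $\delta^*$ is attained either at a vertex pair $\|u-v\|$ with $u \in V_1, v \in V_2$, or as the \fr{} of some pair of contracted edges; both families have been enumerated and tabulated in preprocessing, giving $\calO(n^2)$ candidate values. Sorting them once and binary searching calls the decider $\calO(\log n)$ times, producing the claimed total time $2^{\calO(k\log^c k)} \cdot n^{\calO(1)} \cdot \log n$.
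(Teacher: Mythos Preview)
Your proposal is correct and follows essentially the same approach as the paper: build a $\delta$-decider by running the Grohe et al.\ dynamic program on Leimer's isomorphism-invariant decompositions, modify the colorings so that $S_1$ and $S_2$ share a color iff some isomorphism between their subgraphs respects $\delta$, keep all isomorphisms but mark the non-respecting ones so properties \textbf{3.b}/\textbf{4.b} remain valid, and then binary-search over the precomputed candidate distances. If anything, you are slightly more explicit than the paper about which values $\delta^\ast$ can take (vertex-pair distances and contracted-edge \Frechet distances) and about why the marking survives the coset combinations, but the overall strategy is the same.
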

	
	\section{Conclusion} 
	
	We have shown how to compute the \fr{} of two rooted trees and of two graphs with bounded treewidth.
	Several interesting question remain. Since a polynomial time algorithm for computing the \fr{} of two graphs could be used to decide whether the graphs are isomorphic, we do not expect a polynomial time algorithm for general graphs. But perhaps there are other parameters in which the \fr{} of two graphs is fixed parameter tractable.\bigskip

	\hspace{-5.5mm}\textbf{Acknowledgements.} This work is based on the BSc thesis and study research project by the third author Alexander Neuhaus. 
	
	\nolinenumbers
	\footnotesize
	\bibliography{references}

\end{document}